\pgfplotsset{compat=1.14}
\pgfplotsset{every tick label/.append style={font=\footnotesize}}
\newcolumntype{R}{>{\raggedleft\arraybackslash}X}
\newcolumntype{L}{>{\raggedright\arraybackslash}X}
\newcolumntype{C}{>{\centering\arraybackslash}X}
\newcolumntype{M}[1]{>{\centering\arraybackslash}m{#1}}
\newcolumntype{K}{>{\columncolor{gray!20}}C}
\newcolumntype{k}{>{\columncolor{gray!20}}c}
\newlength{\tablen}
\newcolumntype{.}{D{.}{.}{-1}}
\renewcommand\p@subfigure{\arabic{figure}.}
\renewcommand\p@subtable{\arabic{table}.}
\setlist[itemize]{leftmargin=2.5\parindent}
\setlist[enumerate]{leftmargin=2.5\parindent}
\def\addlegendimage{\csname pgfplots@addlegendimage\endcsname}
\theoremstyle{plain}
\newtheorem{proposition}{Proposition}
\theoremstyle{definition}
\newtheorem{example}{Example}[section]
\theoremstyle{remark}
\let\@fnsymbol\@alph
\def\keywords{\vspace{.5em} 
{\noindent \textit{Keywords}: }}
\def\JEL{\vspace{.5em} 
{\noindent \textbf{\emph{JEL} classification number}: }}
\def\AMS{\vspace{.5em} 
{\noindent \textbf{\emph{MSC} class}: }}
\author{
\href{https://sites.google.com/view/laszlocsato}{L\'aszl\'o Csat\'o}\thanks{~Corresponding author. E-mail: \emph{laszlo.csato@sztaki.hu} \newline HUN-REN Institute for Computer Science and Control (HUN-REN SZTAKI), Laboratory on Engineering and Management Intelligence, Research Group of Operations Research and Decision Systems, Budapest, Hungary \newline 
Corvinus University of Budapest (BCE), Institute of Operations and Decision Sciences, Department of Operations Research and Actuarial Sciences, Budapest, Hungary}
$\qquad \qquad$
\href{https://sites.google.com/view/doragretapetroczy}{D\'ora Gr\'eta Petr\'oczy}\thanks{~E-mail: \emph{petroczy.dora@nje.hu} \newline
Institute of Economic, Centre for Economic and Regional Studies (KRTK), E\"otv\"os Lor\'and Research Network (ELKH), Budapest, Hungary \newline
MNB Institute, John von Neumann University, Budapest, Hungary}
}
\title{Bibliometric indices as a measure of performance \\ and competitive balance in the knockout stage of \\ the UEFA Champions League}
\date{\today}
\def\Dedication{ 
{\noindent $\mathfrak{Nur}$ $\mathfrak{wer}$ $\mathfrak{mit}$ $\mathfrak{geringen}$ $\mathfrak{Mitteln}$ $\mathfrak{Großes}$ $\mathfrak{tut}$, $\mathfrak{hat}$ $\mathfrak{es}$ $\mathfrak{gl\ddot{u}cklich}$ $\mathfrak{getroffen}$.}\footnote{~``\emph{Only he that does great things with small means has made a successful hit.}'' (Source: Carl von Clausewitz: \emph{On War}, Book 7, Chapter 22---On the Culminating Point of Victory, translated by Colonel James John Graham, London, N. Tr\"ubner, 1873. \url{http://clausewitz.com/readings/OnWar1873/TOC.htm})}

\flushright
\noindent (Carl von Clausewitz: \emph{Vom Kriege})

\vspace{1cm} 
\justify }
\begin{document}
\newgeometry{top=20mm,bottom=20mm,left=25mm,right=25mm}

\maketitle
\thispagestyle{empty}
\Dedication

\begin{abstract}
\noindent
We argue for the application of bibliometric indices to quantify the long-term uncertainty of outcome in sports. The Euclidean index is proposed to reward quality over quantity, while the rectangle index can be an appropriate measure of core performance. Their differences are highlighted through an axiomatic analysis and several examples. Our approach also requires a weighting scheme to compare different achievements. The methodology is illustrated by studying the knockout stage of the UEFA Champions League in the 20 seasons played between 2003 and 2023: club and country performances as well as three types of competitive balance are considered. Measuring competition at the level of national associations is a novelty. All results are remarkably robust concerning the bibliometric index and the assigned weights. 
Since the performances of national associations are more stable than the results of individual clubs, it would be better to build the seeding in the UEFA Champions League group stage upon association coefficients adjusted for league finishing positions rather than club coefficients.

\keywords{competitive balance; ranking; scientometrics; UEFA Champions League; uncertainty of outcome}

\AMS{91A80, 91B82}

\JEL{C43, D40, Z20}
\end{abstract}

\clearpage

\section{Introduction} \label{Sec1}

This paper aims to connect two seemingly distant fields of academic research: scientometrics and sports economics. However, the approach is not unique, \citet{KoczyStrobel2010} and \citet{LehmannWohlrabe2017}  actually do the reverse by applying sports ranking methods to journals.

Following the pioneering work of \citet{Garfield1979}, a plethora of measures have been suggested, tested, and debated to quantify the influence of a scholar \citep{RavallionWagstaff2011, GlanzelMoed2013, WildgaardSchneiderLarsen2014}. Most academics have probably met with the $h$-index during the evaluation of grant proposals. Nonetheless, there is no consensus on what is the best bibliometric index, mainly due to the need to balance quantity (represented by the number of papers written) and quality (represented by the number of citations to those papers). This dilemma has motivated several articles on the axiomatisation of scientometric measures which can greatly help our understanding of their characteristics \citep{Marchant2009, BouyssouMarchant2010, KoczyNichifor2013, BouyssouMarchant2014, delaVegaVolij2018}.

The economic analysis of team sports is often centred around the issue of competitive balance since the celebrated article of \citet{Rottenberg1956}. Any sporting contest establishes a hierarchy among the competitors by differentiating between winners and losers.
Therefore, uncertainty of outcome has at least three different dimensions \citep{BuzzacchiSzymanskiValletti2003}:
(a) short-term (whether the ex ante winning probabilities of the playing teams are equal in a match);
(b) medium-term (whether all teams have the same chance to be the champion in a given season);
(c) long-term (how volatile is the relative standing of different clubs across some seasons).
Obviously, the measurement of competitive balance is much more than a simple academic problem. For example, several antitrust decisions are based on the claim that fan interest decreases when the uncertainty of outcome declines, thus, public interest demands to maintain competitive balance \citep{Szymanski2003}.

While there are numerous attempts to quantify the level of competition in sports leagues \citep{Zimbalist2002, ManasisAvgerinouNtzoufrasReade2013, ManasisNtzoufras2014}, there exists much less literature on how to measure it in knockout tournaments \citep{delCorral2009, ConsidineGallagher2018}. However, several popular sports championships---the tennis Grand Slam tournaments, the final stages of the FIFA World Cups and UEFA European Championships in (association) football, the playoffs in the NBA, NFL, NHL---are designed in this format. A natural expectation of both the fans and the regulators can be to avoid the dominance of the same players or teams across several seasons.

The current paper aims to quantify long-term uncertainty of outcome in such tournaments. The main challenge is similar to the central problem of bibliometrics: What is the trade-off between quantity and quality? Does a team obtain a greater market share when it has won the title once or when it has never won but qualified for the semifinals four times? A further difficulty of the measurement can be that the teams participating in a given year may not participate the year after \citep{Koning2009}.

We offer a solution to these challenges by using bibliometric indices. In particular, the Euclidean index \citep{PerryReny2016} and the rectangle index \citep{FennerHarrisLeveneBar-Ilan2018, LeveneFennerBar-Ilan2019} will be considered with various weights assigned for the achievements of the teams, analogous to the number of times a paper has been referred to. An axiomatic comparison will be provided to highlight the differences between the two measures. The Euclidean index rewards top achievements and is always increased when the performance improves. On the other hand, the rectangle index focuses on core performance. For example, if a team usually qualifies for the semifinals, it does not count whether it is eliminated in the Round of 16 or the quarterfinals in the next year. Since both concepts can be justified, we do not promote any of them over the other.

Several papers quantify uncertainty of outcome between consecutive seasons by using sophisticated statistical techniques. While our approach is much simpler, that can help its spread among the stakeholders as the popularity of scientometric indices shows: it is usually more attractive---although might be potentially misleading---to express a complex phenomenon in sports (or any other field) with a single number. The suggested framework can be understood with limited methodological knowledge, and one can easily ``open the black box'' behind the results. Anyway, measuring trends over time in a novel manner compared to the prior literature is able to reinforce or deny well-established findings which is an important issue in academic research, even though the conclusions of the analysis do not fundamentally differ from the findings of the ``standard'' literature.

The theoretical overview is followed by a real-world application. The proposed methodology will be used first to quantify the performance of clubs and national associations in the recent seasons of the UEFA Champions League, the most prestigious European football tournament.
In particular, a rolling five-season window is chosen because both the UEFA club and country coefficients---that measure the results of individual clubs (for seeding in the Champions League) and the member associations (for determining the access list of the Champions League), respectively---take into account the matches played over the past five seasons of European competitions. \citet{Milanovic2005} also examines five-year periods as reasonable over which concentration can be observed and calculated.

After that, three types of competitive balance will be analysed at the level of clubs, between the countries, and within the countries. It seems that previous studies have not discussed competition at the level of national associations in the UEFA Champions League. Therefore, they have failed to identify if different clubs from the same country have dominated the tournament. We can grab even this aspect of reduced competition.


The rest of the paper is organised as follows.
Section~\ref{Sec2} presents the two bibliometric indices and their properties. The UEFA Champions League data and the details of our application are discussed in Section~\ref{Sec3}, followed by a concise overview of previous literature on competitive balance in European football and the analysis of the results in Section~\ref{Sec4}. Finally, Section~\ref{Sec5} concludes.

\section{Methodology: bibliometric indices} \label{Sec2}

This section presents the axiomatic comparison of two bibliometric indices that will be used to quantify competitive balance. While it does not contain any new results, it might be useful for readers who are not familiar with the principles behind these measures.

Consider a championship consisting of $n$ races, where each position in each race earns some non-negative points for the competitors. The points scored by a contestant are collected into the \emph{score vector} $\mathbf{x} = \left[ x_1, x_2, \dots , x_n \right]$ that is sorted in descending order, i.e.\ $x_i \geq x_j$ for all $1 \leq i < j \leq n$. The achievement of a contestant in the whole championship is measured by an \emph{aggregation function} $f$ that maps score vectors to the set of non-negative real numbers.

The score vector is equivalent to a citation vector of a scholar, consequently, a bibliometric index can be used for its evaluation. In the case of competitive balance, most papers use the simple aggregation function $f(\mathbf{x}) = \sum_{i=1}^n x_i$, essentially counting the number of points for each competitor. However, this is not necessarily an appropriate approach in a knockout tournament. We consider two other bibliometric indices, in particular, the \emph{Euclidean index} \citep{PerryReny2016} and the \emph{rectangle index} \citep{FennerHarrisLeveneBar-Ilan2018, LeveneFennerBar-Ilan2019}.

The Euclidean index $E(\mathbf{x})$ of a score vector $\mathbf{x} = \left[ x_1, x_2, \dots , x_n \right]$ is the Euclidean norm of the score vector:
\[
E \left( \mathbf{x} \right) = \sqrt{\sum_{i=1}^n x_i^2}.
\]

The rectangle index $R(\mathbf{x})$ of a score vector $\mathbf{x} = \left[ x_1, x_2, \dots , x_n \right]$ is the area of the largest rectangle that can fit under the score vector:
\[
R \left( \mathbf{x} \right) = \max_{1 \leq i \leq n} i x_i.
\]
If the largest rectangle is a square, the rectangle index is equivalent to the well-known $h$-index \citep{Hirsch2005}, that is, the maximal number $h$ of races where the contestant scored at least $h$ points in each of them.

In order to better understand these measures, it is worth using an axiomatic approach.
Some reasonable requirements for aggregation functions are the following:
\begin{itemize}
\item
\emph{Monotonicity}: if score vector $\mathbf{x} = \left[ x_1, x_2, \dots , x_n \right]$ is dominated by score vector $\mathbf{y} = \left[ y_1, y_2, \dots , y_n \right]$, that is, $x_i \leq y_i$ holds for all $1 \leq i \leq n$, then $f(\mathbf{x}) \leq f(\mathbf{y})$.
\item
\emph{Independence}: if $f(\mathbf{x}) \leq f(\mathbf{y})$ and a new, $n+1$th race is added to the championship where both competitors score the same number of points, then the ordinal ranking of their achievements does not change.
\item
\emph{Depth relevance}: if score vector $\mathbf{x}$ is modified such that a new, $n+1$th race is added and the number of points $x_i$ in the $i$th race is split into two parts between the $i$th and $n+1$th races, then the overall achievement should decrease.
\item
\emph{Scale invariance}: $f(\mathbf{x}) \leq f(\mathbf{y}) \iff f(c\mathbf{x}) \leq f(c\mathbf{y})$ for any $c > 0$.
\item
\emph{Directional consistency}: if $f(\mathbf{x}) = f(\mathbf{y})$ holds for the score vectors $\mathbf{x}$ and $\mathbf{y}$, and both of them are shifted by the same score vector $\mathbf{d}$ such that $f(\mathbf{x} + \mathbf{d}) = f(\mathbf{y} + \mathbf{d})$, then $f(\mathbf{x} + \lambda \mathbf{d}) = f(\mathbf{y} + \lambda \mathbf{d})$ for any $\lambda > 1$.
\item
\emph{Uniform citation}: if the score vector $\mathbf{x}$ is uniform, that is, $x_1 = x_2 = \cdots = x_n$, then $f(\mathbf{x}) = \sum_{i=1}^n x_i$.
\item
\emph{Uniform equivalence}: for any score vector $\mathbf{x}$, there exists a uniform score vector $\mathbf{u}$ dominated by $\mathbf{x}$ such that $f(\mathbf{x}) = f(\mathbf{u})$.
\end{itemize}
Monotonicity is a natural requirement, the achievement of a contestant with at least the same performance cannot be lower.
Independence allows the addition of identical records for two competitors without changing their ranking.
Depth relevance excludes the overall achievement to be maximised by spreading the number of points collected thinly across the races. In other words, this property rewards quality over quantity. Note that summing up the number of points does not satisfy depth relevance.
According to scale invariance, multiplying the number of points by a positive scaling factor does not affect the order of two contestants. The $h$-index does not satisfy scale invariance \citep{PerryReny2016}.
Directional consistency means that if two tied competitors remain tied when their number of points are increased by the same amounts in decreasing order, they continue to remain equally ranked if this growth vector is multiplied by any positive constant.
Uniform citation requires the overall achievement to be equal to the number of points when they are level across all races.
Finally, according to uniform equivalence, the same number of points should be counted for each race in the overall achievement of a contestant.

An aggregation function $f$ satisfies monotonicity, independence, depth relevance, scale invariance, and directional consistency if and only if it is equivalent to the Euclidean index \citep[Theorem~1]{PerryReny2016}.
An aggregation function $f$ meets monotonicity, uniform citation, and uniform equivalence if and only if it is the rectangle index \citep[Theorem~4.1]{LeveneFennerBar-Ilan2019}.

A stronger version of scale invariance is satisfied by both measures.

\begin{proposition} \label{Prop1}
The share of a competitor's achievement in the total achievement under the Euclidean and the rectangle indices does not change if the scores are multiplied by the same positive number $c > 0$.
\end{proposition}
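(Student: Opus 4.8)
The plan is to reduce the statement to the fact that both indices are positively homogeneous of degree one, i.e.\ $E(c\mathbf{x}) = c\, E(\mathbf{x})$ and $R(c\mathbf{x}) = c\, R(\mathbf{x})$ for every score vector $\mathbf{x}$ and every $c > 0$, and then to observe that when all competitors' achievements scale by the same factor, that factor cancels in the ratio defining a share.

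First I would treat the Euclidean index. Since $c > 0$, we have $\sqrt{\sum_{i=1}^n (c x_i)^2} = \sqrt{c^2 \sum_{i=1}^n x_i^2} = c \sqrt{\sum_{i=1}^n x_i^2}$, where the last equality uses $\sqrt{c^2} = c$ precisely because $c$ is positive; hence $E(c\mathbf{x}) = c\, E(\mathbf{x})$. Next I would treat the rectangle index: multiplying every coordinate by $c > 0$ turns $i x_i$ into $i(c x_i) = c\,(i x_i)$ for each $i$, and scaling by a positive constant preserves the location of a maximum, so $\max_{1 \leq i \leq n} i(c x_i) = c \max_{1 \leq i \leq n} i x_i$, i.e.\ $R(c\mathbf{x}) = c\, R(\mathbf{x})$. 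This step would fail for $c < 0$, which is why positivity is assumed; it is also why the indices satisfy only the ordinal scale-invariance axiom in full generality, whereas the present proposition records the sharper cardinal statement.

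Finally, let the competitors have score vectors $\mathbf{x}^{(1)}, \dots, \mathbf{x}^{(m)}$ and write $g$ for either $E$ or $R$. The share of competitor $k$ equals $g(\mathbf{x}^{(k)}) \big/ \sum_{j=1}^m g(\mathbf{x}^{(j)})$. Replacing each $\mathbf{x}^{(j)}$ by $c\mathbf{x}^{(j)}$ and applying homogeneity, this becomes $c\, g(\mathbf{x}^{(k)}) \big/ \sum_{j=1}^m c\, g(\mathbf{x}^{(j)}) = g(\mathbf{x}^{(k)}) \big/ \sum_{j=1}^m g(\mathbf{x}^{(j)})$, as long as the denominator is nonzero, which holds whenever at least one competitor has a nonzero score.

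There is no genuine obstacle here; the only points requiring care are the consistent use of $c > 0$ (both in $\sqrt{c^2} = c$ and in preserving the $\arg\max$) and flagging the degenerate all-zero case, where the share is undefined from the outset.
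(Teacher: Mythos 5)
Your proposal is correct and follows essentially the same route as the paper, which also derives the result from the degree-one homogeneity $E(c\mathbf{x}) = cE(\mathbf{x})$ and $R(c\mathbf{x}) = cR(\mathbf{x})$; you merely spell out the cancellation in the share ratio and the degenerate all-zero case, which the paper leaves implicit.
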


\begin{proof}
The invariance is implied by $E(c \mathbf{x}) = c E(\mathbf{x})$ and $R(c \mathbf{x}) = c R(\mathbf{x})$ for any score vector $\mathbf{x}$.
\end{proof}

According to Proposition~\ref{Prop1}, the two approaches are influenced only by the ratio of the points to be scored in each race, but they do not depend on the absolute size of the points.

Consider a \emph{binary championship} where there are only two types of positions, for example, a contestant either qualifies or does not qualify.

\begin{proposition} \label{Prop2}
The rankings of the competitors under the Euclidean and rectangle indices coincide in a binary championship.
\end{proposition}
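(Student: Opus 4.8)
The plan is to collapse the comparison to a single integer parameter. In a binary championship the two position types earn, say, $c$ and $0$ points with $c>0$ (``qualifies'' versus ``does not qualify''); by Proposition~\ref{Prop1} the rankings produced by either index are unaffected if we divide every score by $c$, so we may take the qualifying value to be $1$. Since every score vector is sorted in descending order, the vector of a competitor that qualified in $k$ of the $n$ races is then precisely $\mathbf{x}_k = [\,\underbrace{1,\dots,1}_{k},\underbrace{0,\dots,0}_{n-k}\,]$ with $k\in\{0,1,\dots,n\}$. Hence each competitor's position in either ranking is a function of its $k$ alone, and it suffices to check that both indices are strictly increasing in $k$.

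Next I would simply evaluate the two indices on $\mathbf{x}_k$. Writing $x_1,\dots,x_n$ for its entries, we have $x_i=1$ for $i\le k$ and $x_i=0$ for $i>k$; therefore $E(\mathbf{x}_k)=\sqrt{\sum_{i=1}^n x_i^2}=\sqrt{k}$, and $i x_i=i$ for $i\le k$ while $i x_i=0$ for $i>k$, so the maximum of $i x_i$ over $1\le i\le n$ is attained at $i=k$ and $R(\mathbf{x}_k)=k$ (which is $0$ when $k=0$, as it should be). In particular $R(\mathbf{x}_k)=E(\mathbf{x}_k)^2$, so for any two competitors $A$ and $B$ with qualification counts $k_A$ and $k_B$,
\[
E(\mathbf{x}_{k_A})\le E(\mathbf{x}_{k_B}) \iff \sqrt{k_A}\le\sqrt{k_B} \iff k_A\le k_B \iff R(\mathbf{x}_{k_A})\le R(\mathbf{x}_{k_B}),
\]
using that $t\mapsto t^2$ is increasing on $[0,\infty)$; hence the two indices induce the same ranking (ties included) on the competitors.

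The computation itself is routine, so the one point that really needs care is the modelling step: reading ``binary championship'' so that the non-qualifying outcome is worth $0$ points. If instead both point values were strictly positive, $p>q>0$, then on $[\,p^k,q^{\,n-k}\,]$ one gets $R=\max\{kp,\,nq\}$, which ties any two competitors with $k_A p, k_B p< nq$ even though $E$ separates them (since $E^2=nq^2+k(p^2-q^2)$ is injective in $k$), and the orders would no longer coincide. Pinning down this convention is therefore the substantive part of the argument; once it is fixed, the statement follows from the two one-line evaluations above.
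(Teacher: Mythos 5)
Your proposal is correct and follows essentially the same route as the paper's proof: invoke Proposition~\ref{Prop1} to normalise the point values to $0$ and $1$, and then observe that both $E$ and $R$ are (strictly) increasing functions of the number $k$ of qualifications, so they order the competitors identically. Your closing remark---that the non-qualifying position must be worth $0$ points, since with two strictly positive values the rectangle index can tie competitors whom the Euclidean index separates---makes explicit a convention that the paper's ``without loss of generality $0$ and $1$'' step assumes implicitly, which is a worthwhile clarification but not a different argument.
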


\begin{proof}
Due to Proposition~\ref{Prop1}, it can be assumed without loss of generality that the points to be scored are $0$ and $1$. Consequently, the score vector of any competitor consists of $k$ pieces of one and $n-k$ pieces of zero. Then, $E(\mathbf{x}) \leq E(\mathbf{y})$ if and only if score vector $\mathbf{x}$ does not contain more ones than score vector $\mathbf{y}$. The same argument can be applied for the rectangle index.
\end{proof}

The differences between the Euclidean and rectangle indices are highlighted by the fact that the latter measure fails to satisfy three properties used to characterise the former as the following examples show. However, the rectangle index remains scale invariant (Proposition~\ref{Prop1}) which is the main reason why we have chosen it instead of the $h$-index.

\begin{figure}[t!]
\centering

\begin{subfigure}{\textwidth}
\centering
\caption{The original score vectors: Player 1 is better than Player 2 \\ by the Euclidean index but worse by the rectangle index}
\label{Fig1a}

\begin{tikzpicture}
\begin{axis}[width=0.3\textwidth, 
height=0.35\textwidth,
tick label style={/pgf/number format/fixed},
symbolic x coords={($T1$)},
xtick = \empty,
xlabel = Player 1,
xlabel style = {font=\small},
ybar = 0.25cm,
ymax = 5.5,
ymin = 0,
ymajorgrids = true,
bar width = 1cm,
]

\addplot [red, pattern color = red, pattern = horizontal lines, very thick] coordinates{
(($T1$),5)
};

\addplot [red, pattern color = red, pattern = horizontal lines, very thick] coordinates{
(($T1$),1)
};
\end{axis}
\end{tikzpicture}
\hspace{0.2\textwidth}
\begin{tikzpicture}
\begin{axis}[width=0.3\textwidth, 
height=0.35\textwidth,
tick label style={/pgf/number format/fixed},
symbolic x coords={($T1$)},
xtick = \empty,
xlabel = Player 2,
xlabel style = {font=\small},
ybar = 0.25cm,
ymax = 5.5,
ymin = 0,
ymajorgrids = true,
bar width = 1cm,
]

\addplot [blue, pattern color = blue, pattern = dots, very thick] coordinates{
(($T1$),3)
};

\addplot [blue, pattern color = blue, pattern = dots, very thick] coordinates{
(($T1$),3)
};
\end{axis}
\end{tikzpicture}
\end{subfigure}

\vspace{0.2cm}
\begin{subfigure}{\textwidth}
\centering
\caption{The modified score vectors: Player 1 is better than Player 2 by both \\ indices although both players scored five points in the additional race}
\label{Fig1b}

\begin{tikzpicture}
\begin{axis}[width=0.4\textwidth, 
height=0.35\textwidth,
tick label style={/pgf/number format/fixed},
symbolic x coords={($T1$)},
xtick = \empty,
xlabel = Player 1,
xlabel style = {font=\small},
ybar = 0.25cm,
ymax = 5.5,
ymin = 0,
ymajorgrids = true,
bar width = 1cm,
]

\addplot [red, pattern color = red, pattern = horizontal lines, very thick] coordinates{
(($T1$),5)
};

\addplot [red, pattern color = red, pattern = horizontal lines, very thick] coordinates{
(($T1$),5)
};

\addplot [red, pattern color = red, pattern = horizontal lines, very thick] coordinates{
(($T1$),1)
};
\end{axis}
\end{tikzpicture}
\hspace{0.1\textwidth}
\begin{tikzpicture}
\begin{axis}[width=0.4\textwidth, 
height=0.35\textwidth,
tick label style={/pgf/number format/fixed},
symbolic x coords={($T1$)},
xtick = \empty,
xlabel = Player 2,
xlabel style = {font=\small},
ybar = 0.25cm,
ymax = 5.5,
ymin = 0,
ymajorgrids = true,
bar width = 1cm,
]

\addplot [blue, pattern color = blue, pattern = dots, very thick] coordinates{
(($T1$),5)
};

\addplot [blue, pattern color = blue, pattern = dots, very thick] coordinates{
(($T1$),3)
};

\addplot [blue, pattern color = blue, pattern = dots, very thick] coordinates{
(($T1$),3)
};
\end{axis}
\end{tikzpicture}
\end{subfigure}

\captionsetup{justification=centering}
\caption{The rectangle index does not satisfy independence}
\label{Fig1}

\end{figure}


\begin{example} \label{Examp21}
\emph{The rectangle index does not satisfy independence} \\
Take the score vectors $\mathbf{x} = \left[ 5, 1 \right]$ and $\mathbf{y} = \left[ 3, 3 \right]$ of Players $1$ and $2$, plotted in Figure~\ref{Fig1a}. Since $E(\mathbf{x}) = \sqrt{26} \approx 5.1$ and $E(\mathbf{y}) = \sqrt{18} \approx 4.24$, the Euclidean index favours Player $1$. Contrarily, $R(\mathbf{x}) = 1 \times 5 = 5$ and $R(\mathbf{y}) = 2 \times 3 = 6$, thus, the rectangle index prefers Player $2$ to Player $1$.

Consider a new, third race where both players score five points, that is, $\mathbf{x}' = \left[ 5, 5, 1 \right]$ and $\mathbf{y}' = \left[ 5, 3, 3 \right]$, as depicted in Figure~\ref{Fig1b}. Now,
\[
E \left( \mathbf{x}' \right) = \sqrt{51} \approx 7.14 > E \left( \mathbf{y}' \right) = \sqrt{43} \approx 6.56 \text{, and}
\]
\[
R \left( \mathbf{x}' \right) = 2 \times 5 = 10 > R \left( \mathbf{y}' \right) = 3 \times 3 = 9,
\]
hence, both measures rank Player $1$ above Player $2$. To summarise, the two players have the same performance in the additional race but their relative ranking is exchanged by the rectangle index. When Player $1$ has two outstanding results of five points, this seems to be the rule rather than an exception as in the first case.
\end{example}

\begin{figure}[t!]
\centering

\begin{tikzpicture}
\begin{axis}[width=0.4\textwidth, 
height=0.35\textwidth,
tick label style={/pgf/number format/fixed},
symbolic x coords={($T1$)},
xtick = \empty,
xlabel = Player 1,
xlabel style = {font=\small},
ybar = 0.25cm,
ymax = 5.5,
ymin = 0,
ymajorgrids = true,
bar width = 1cm,
]

\addplot [red, pattern color = red, pattern = horizontal lines, very thick] coordinates{
(($T1$),5)
};

\addplot [red, pattern color = red, pattern = horizontal lines, very thick] coordinates{
(($T1$),2)
};
\end{axis}
\end{tikzpicture}
\hspace{0.1\textwidth}
\begin{tikzpicture}
\begin{axis}[width=0.4\textwidth, 
height=0.35\textwidth,
tick label style={/pgf/number format/fixed},
symbolic x coords={($T1$)},
xtick = \empty,
xlabel = Player 2,
xlabel style = {font=\small},
ybar = 0.25cm,
ymax = 5.5,
ymin = 0,
ymajorgrids = true,
bar width = 1cm,
]

\addplot [blue, pattern color = blue, pattern = dots, very thick] coordinates{
(($T1$),3)
};

\addplot [blue, pattern color = blue, pattern = dots, very thick] coordinates{
(($T1$),2)
};

\addplot [blue, pattern color = blue, pattern = dots, very thick] coordinates{
(($T1$),2)
};
\end{axis}
\end{tikzpicture}

\captionsetup{justification=centering}
\caption{The rectangle index does not satisfy depth relevance}
\label{Fig2}

\end{figure}


\begin{example} \label{Examp22}
\emph{The rectangle index does not satisfy depth relevance} \\
Take the score vectors $\mathbf{x} = \left[ 5, 2 \right]$ and $\mathbf{y} = \left[ 3, 2, 2 \right]$ of Players $1$ and $2$, plotted in Figure~\ref{Fig2}. Since $R(\mathbf{x}) = 1 \times 5 = 5$ and $R(\mathbf{y}) = 3 \times 2 = 6$ but both players have scored the same number of points in total (7), the rectangle index violates depth relevance.
The rectangle index prefers better core performance after a race is split into two.
On the other hand, $E(\mathbf{x}) = \sqrt{29} \approx 5.39$ and $E(\mathbf{y}) = \sqrt{17} \approx 4.12$, thus, the Euclidean index favours player $1$.
\end{example}

\begin{figure}[t!]
\centering

\begin{subfigure}{\textwidth}
\centering
\caption{The score vectors of Player $1$}
\label{Fig3a}

\begin{tikzpicture}
\begin{axis}[width=0.3\textwidth, 
height=0.35\textwidth,
tick label style={/pgf/number format/fixed},
symbolic x coords={($T1$)},
xtick = \empty,
xlabel = Original,
xlabel style = {font=\small},
ybar = 0.2cm,
ymax = 7.5,
ymin = 0,
ymajorgrids = true,
bar width = 0.8cm,
]

\addplot [red, pattern color = red, pattern = horizontal lines, very thick] coordinates{
(($T1$),3)
};

\addplot [red, pattern color = red, pattern = horizontal lines, very thick] coordinates{
(($T1$),3)
};
\end{axis}
\end{tikzpicture}
\hspace{0.1\textwidth}
\begin{tikzpicture}
\begin{axis}[width=0.3\textwidth, 
height=0.35\textwidth,
tick label style={/pgf/number format/fixed},
symbolic x coords={($T1$)},
xtick = \empty,
xlabel = One point is added,
xlabel style = {font=\small},
ybar = 0.2cm,
ymax = 7.5,
ymin = 0,
ymajorgrids = true,
bar width = 0.8cm,
]

\addplot [red, pattern color = red, pattern = horizontal lines, very thick] coordinates{
(($T1$),4)
};

\addplot [red, pattern color = red, pattern = horizontal lines, very thick] coordinates{
(($T1$),3)
};
\end{axis}
\end{tikzpicture}
\hspace{0.1\textwidth}
\begin{tikzpicture}
\begin{axis}[width=0.3\textwidth, 
height=0.35\textwidth,
tick label style={/pgf/number format/fixed},
symbolic x coords={($T1$)},
xtick = \empty,
xlabel = Four points are added,
xlabel style = {font=\small},
ybar = 0.2cm,
ymax = 7.5,
ymin = 0,
ymajorgrids = true,
bar width = 0.8cm,
]

\addplot [red, pattern color = red, pattern = horizontal lines, very thick] coordinates{
(($T1$),7)
};

\addplot [red, pattern color = red, pattern = horizontal lines, very thick] coordinates{
(($T1$),3)
};
\end{axis}
\end{tikzpicture}
\end{subfigure}

\vspace{0.2cm}
\begin{subfigure}{\textwidth}
\centering
\caption{The score vectors of Player $2$}
\label{Fig3b}

\begin{tikzpicture}
\begin{axis}[width=0.3\textwidth, 
height=0.35\textwidth,
tick label style={/pgf/number format/fixed},
symbolic x coords={($T1$)},
xtick = \empty,
xlabel = Original,
xlabel style = {font=\small},
ybar = 0.15cm,
ymax = 7.5,
ymin = 0,
ymajorgrids = true,
bar width = 0.6cm,
]

\addplot [blue, pattern color = blue, pattern = dots, very thick] coordinates{
(($T1$),2)
};

\addplot [blue, pattern color = blue, pattern = dots, very thick] coordinates{
(($T1$),2)
};

\addplot [blue, pattern color = blue, pattern = dots, very thick] coordinates{
(($T1$),2)
};
\end{axis}
\end{tikzpicture}
\hspace{0.1\textwidth}
\begin{tikzpicture}
\begin{axis}[width=0.3\textwidth, 
height=0.35\textwidth,
tick label style={/pgf/number format/fixed},
symbolic x coords={($T1$)},
xtick = \empty,
xlabel = One point is added,
xlabel style = {font=\small},
ybar = 0.15cm,
ymax = 7.5,
ymin = 0,
ymajorgrids = true,
bar width = 0.6cm,
]

\addplot [blue, pattern color = blue, pattern = dots, very thick] coordinates{
(($T1$),3)
};

\addplot [blue, pattern color = blue, pattern = dots, very thick] coordinates{
(($T1$),2)
};

\addplot [blue, pattern color = blue, pattern = dots, very thick] coordinates{
(($T1$),2)
};
\end{axis}
\end{tikzpicture}
\hspace{0.1\textwidth}
\begin{tikzpicture}
\begin{axis}[width=0.3\textwidth, 
height=0.35\textwidth,
tick label style={/pgf/number format/fixed},
symbolic x coords={($T1$)},
xtick = \empty,
xlabel = Four points are added,
xlabel style = {font=\small},
ybar = 0.15cm,
ymax = 7.5,
ymin = 0,
ymajorgrids = true,
bar width = 0.6cm,
]

\addplot [blue, pattern color = blue, pattern = dots, very thick] coordinates{
(($T1$),6)
};

\addplot [blue, pattern color = blue, pattern = dots, very thick] coordinates{
(($T1$),2)
};

\addplot [blue, pattern color = blue, pattern = dots, very thick] coordinates{
(($T1$),2)
};
\end{axis}
\end{tikzpicture}
\end{subfigure}

\captionsetup{justification=centering}
\caption{The rectangle index does not satisfy directional consistency}
\label{Fig3}

\end{figure}


\begin{example} \label{Examp23}
\emph{The rectangle index does not satisfy directional consistency} \\
Take the score vectors $\mathbf{x} = \left[ 3, 3, 0 \right]$ and $\mathbf{y} = \left[ 2, 2, 2 \right]$ of Players $1$ and $2$, plotted in Figures~\ref{Fig3a} and \ref{Fig3b}, respectively. Add the score vector $\mathbf{d} = \left[ 1, 0, 0 \right]$ to them. Since $R(\mathbf{x}) = 2 \times 3 = 3 \times 2 = R(\mathbf{y})$ and $R(\mathbf{x} + \mathbf{d}) = 2 \times 3 = 3 \times 2 = R(\mathbf{y} + \mathbf{d})$, directional consistency requires that $R(\mathbf{x} + 4\mathbf{d})$ equals $R(\mathbf{y} + 4\mathbf{d})$. But $R(\mathbf{x} + 4\mathbf{d}) = 1 \times 7 = 7$ and $R(\mathbf{y} + 4\mathbf{d}) = 1 \times 6 = 6$.
The rectangle index disregards further results if the performance in one race becomes outstanding.
On the other hand,  the Euclidean index always prefers Player $1$: $E(\mathbf{x}) = \sqrt{18} > E(\mathbf{y}) = \sqrt{12}$, $E(\mathbf{x} + \mathbf{d}) = \sqrt{25} > E(\mathbf{y} + \mathbf{d}) = \sqrt{17}$, and $E(\mathbf{x} + 4\mathbf{d}) = \sqrt{58} > E(\mathbf{y} + 4\mathbf{d}) = \sqrt{44}$.
\end{example}

\begin{table}[t]
  \centering
  \captionsetup{justification = centering}
  \caption{Axiomatic comparison of two bibliometric indices}
  \label{Table1}
\begin{threeparttable}
\rowcolors{1}{}{gray!20} 
    \begin{tabularx}{0.8\textwidth}{lCC} \toprule
    Property & Euclidean index & Rectangle index \\ \bottomrule
    Monotonicity & \textcolor{ForestGreen}{\ding{52}}$^\ast$ & \textcolor{ForestGreen}{\ding{52}}$^\ast$ \\
    Independence & \textcolor{ForestGreen}{\ding{52}}$^\ast$ & \textcolor{BrickRed}{\ding{55}} \\
    Depth relevance & \textcolor{ForestGreen}{\ding{52}}$^\ast$ & \textcolor{BrickRed}{\ding{55}} \\
    Scale invariance & \textcolor{ForestGreen}{\ding{52}}$^\ast$ & \textcolor{ForestGreen}{\ding{52}} \\
    Directional consistency & \textcolor{ForestGreen}{\ding{52}}$^\ast$ & \textcolor{BrickRed}{\ding{55}} \\
    Uniform citation & (\textcolor{BrickRed}{\ding{55}})\tnote{1} & \textcolor{ForestGreen}{\ding{52}}$^\ast$ \\
    Uniform equivalence & \textcolor{BrickRed}{\ding{55}} & \textcolor{ForestGreen}{\ding{52}}$^\ast$ \\ \toprule
    \end{tabularx}
\begin{tablenotes} \footnotesize
\item[1] The Euclidean index violates uniform citation but it will satisfy it if multiplied by $\sqrt{n}$.
\item
The combination of the axioms with an asterisk characterises the corresponding index.
\end{tablenotes}
\end{threeparttable}
\end{table}

Table~\ref{Table1} summarises the axiomatic comparison of the two indices. They grab a different aspect of the contestant's performance. The Euclidean index takes into account all points scored but rewards quality over quantity by summing up their squares. Contrarily, the rectangle index supports balanced achievements by ignoring unexpected peaks which are often due to mere luck or unusually favourable circumstances in sports.

The quantified achievements of the contestants can be used for at least two purposes: to measure their performance and to determine competitive balance in the championship. The first is revealed by the share of the total reward earned by a competitor. Competitive balance can be determined by evaluating the inequality in the achievements of the contestants, for example, according to the Herfindahl--Hirschman index or the Gini coefficient.

We do not say that any of the two approaches dominates the other. The choice should depend on the preferences of the decision-maker. However, if both measures indicate the same tendencies, this would be powerful evidence for a change in the corresponding direction. On the other hand, if the implications from the two underlying indices seem to contradict each other, one cannot be sure what happens in the competition.

\section{Data and implementation} \label{Sec3}

The UEFA Champions League, or simply the Champions League, is the most prestigious annual club football competition in Europe. The tournament format is fixed since the 2003/04 season (but will change from 2024/25 \citep{UEFA2022a}): a round-robin group stage with eight groups of four---altogether 32---teams each such that the top two teams from each group qualify to the Round of 16 in the knockout phase. 

The level of competition at the group stage is not very interesting because the Champions League slots are allocated in a deterministic way based on the ranking of UEFA member associations. For example, it is guaranteed that four clubs from the top four leagues of England, Germany, Italy, and Spain participate in the 2022/23 season. \citet{Csato2022b} provides more details on the Champions League qualification.

Tables~\ref{Table_A1} and \ref{Table_A2} in the Appendix show all participants in the knockout stage of the Champions League over the 20 seasons played between 2003 and 2023. In each year, there is exactly one winner (W), while one team is eliminated in the final (F), two are eliminated in the semifinals (SF), four in the quarterfinals (QF), and eight in the Round of 16 (R16).

As all teams have a national affiliation and the qualification depends on the ranking of UEFA member countries, at least three types of competitive balance can be distinguished:
\begin{itemize}
\item
the overall competitive balance at the level of clubs;
\item
the between-country competitive balance when only the national association of the teams is taken into consideration; and
\item
the within-country competitive balance, that is, how stable is the set of teams coming from a given national association.
\end{itemize}
Analogously, the performances of clubs and countries are worth measuring separately.

\begin{table}[t]
  \centering
  \caption{The weights assigned to the positions}
  \label{Table2}
  \rowcolors{1}{gray!20}{} 
    \begin{tabularx}{0.6\textwidth}{CCCCC} \toprule \hiderowcolors
    \multirow{2}[0]{*}{Position} & \multicolumn{4}{c}{Weighting method} \\
          & $W1$ & $W2$ & $W3$ & $W4$ \\ \bottomrule \showrowcolors
    W     & 16    & 5     & 6     & 1 \\
    F     & 8     & 4     & 5     & 1 \\
    SF    & 4     & 3     & 4     & 1 \\
    QF    & 2     & 2     & 3     & 1 \\
    R16   & 1     & 1     & 2     & 1 \\ \toprule
    \end{tabularx}
\end{table}

They all can be quantified by the methodology proposed in Section~\ref{Sec2}. In particular, any season is regarded as a race, and weights are associated with each position (W, F, SF, QF, R16) in each season according to Table~\ref{Table2}.
Weighting method $W1$ strongly favours the top positions, but note that there are 16 teams in the Round of 16 compared to the unique winner. Weighting method $W4$ is extremal by counting only whether a team has qualified for the knockout stage. Weightings $W2$ and $W3$ represent a middle course between the other two. \citet{Milanovic2005} uses a really sharp assignment of points to calculate Gini coefficients in the Champions League: 10 points for the winner, 6 points for the finalist, 2 points for the two semi-finalists, and 1 point for the four quarterfinalists.

Admittedly, our choice for the weights seems to be somewhat arbitrary. However, studying different sets of values and two bibliometric indices can indicate if the qualitative results are sensitive to this setting.

Both performance and competitive balance are assessed across subsequent seasons, hence an appropriate time window should be considered. As has been mentioned in the Introduction, a rolling five-year window can be justified by the seeding policy of the UEFA.

The positions of team $i$ (or country $i$) in a given period are collected into the score vector $\mathbf{x}^{(i)}$ that is transformed into a real number $f \left( \mathbf{x}^{(i)} \right)$ by the Euclidean or rectangle index. After that, the ``market share'' of club $i$ ($1 \leq i \leq n$) is calculated to directly obtain its performance:
\begin{equation} \label{eq1}
s_i = \frac{f \left( \mathbf{x}^{(i)} \right)}{\sum_{i=1}^n f \left( \mathbf{x}^{(i)} \right)},
\end{equation}

For competitive balance, the Herfindahl--Hirschman index ($\mathit{HHI}$) is computed as follows:
\begin{equation} \label{eq2}
\mathit{HHI} = \sum_{i=1}^n s_i^2 = \sum_{i=1}^n \left( \frac{f \left( \mathbf{x}^{(i)} \right)}{\sum_{i=1}^n f \left( \mathbf{x}^{(i)} \right)} \right)^2.
\end{equation}
A higher value of the $\mathit{HHI}$ indicates a \emph{lower} level of competitive balance.

Note that the minimum concentration is not zero in this hierarchical system since, e.g.\ there should be 16 teams in the Round of 16 in every year. However, this is not a problem because we will focus on relative changes.

\begin{example} \label{Examp31}
There are three Portuguese teams in our sample, Benfica, Porto, and Sporting CP. Let us concentrate on the five seasons from 2014/15 to 2018/19 and take the weighting scheme $W2$. The score vector of Benfica is $\left[ 2,1,0,0,0 \right]$, the score vector of Porto is $\left[ 2,2,1,1,0 \right]$, and the score vector of Sporting CP is $\left[ 0,0,0,0,0 \right]$. Consequently, the Euclidean index of Benfica is $E_B = \sqrt{5} \approx 2.24$, the rectangle index of Benfica is $R_B = 1 \times 2 = 2$, the Euclidean index of Porto is $E_P = \sqrt{10} \approx 3.16$, the rectangle index of Porto is $R_B = 2 \times 2 = 4$, while both measures are zero for Sporting CP.

The within-country competitive balance is
\[
\left( \frac{E_B}{E_B + E_P} \right)^2 + \left( \frac{E_P}{E_B + E_P} \right)^2 = \frac{5 + 10}{5 + 2 \sqrt{5} \sqrt{10} + 10} \approx 0.515
\]
under the Euclidean index, and it is
\[
\left( \frac{R_B}{R_B + R_P} \right)^2 + \left( \frac{R_P}{R_B + R_P} \right)^2 = \frac{4 + 16}{36} \approx 0.556
\]
under the rectangle index.
\end{example}

The number of seasons and clubs considered determines the maximal possible number of ``publications'' to account for. Therefore, in the case of the rectangle index, it is advised to use weighting methods $W2$, $W3$, and $W4$ when there are only a few opportunities to score points, thus, there is no need to provide many points for the best positions. However, if the number of qualifications to the knockout stage can be high, the weighting methods $W3$ and $W4$ usually result in the same ranking of the clubs (countries) by simply counting the number of times the Round of 16 is reached.

\begin{table}[t]
  \centering
  \caption{The total achievement of Porto over the 16 seasons}
  \label{Table3}
  \rowcolors{1}{gray!20}{} 
    \begin{tabularx}{\textwidth}{c CCCC} \toprule \hiderowcolors
    \multirow{2}[0]{*}{Position} & \multicolumn{4}{c}{Weighting method} \\
          & $W1$ & $W2$ & $W3$ & $W4$ \\ \bottomrule \showrowcolors
    W     & 16    & 5     & 6     & 1 \\
    QF    & 4     & 2     & 3     & 1 \\
    QF    & 4     & 2     & 3     & 1 \\
    QF    & 4     & 2     & 3     & 1 \\
    QF    & 4     & 2     & 3     & 1 \\
    R16   & 1     & 1     & 2     & 1 \\
    R16   & 1     & 1     & 2     & 1 \\
    R16   & 1     & 1     & 2     & 1 \\
    R16   & 1     & 1     & 2     & 1 \\
    R16   & 1     & 1     & 2     & 1 \\
    R16   & 1     & 1     & 2     & 1 \\
    R16   & 1     & 1     & 2     & 1 \\ 
    R16   & 1     & 1     & 2     & 1 \\ \toprule
    Rectangle index & $4 \times 5 = 20$    & $1 \times 13 = 13$    & $2 \times 13 = 26$    & $1 \times 13 = 13$ \\ \bottomrule
    \end{tabularx}
\end{table}

\begin{example} \label{Examp32}
Table~\ref{Table3} summarises the positions of Porto in the 20 Champions League seasons played between 2003 and 2023. Its achievement remains the same under weightings $W2$, $W3$, and $W4$ because the rectangle index is based on the 13 occasions when the club qualified to the Round of 16: the ``natural'' position of Porto is in the Round of 16. On the other hand, if weighting $W1$ is applied, the five participations in the quarterfinals will be counted.
\end{example}

Similarly, it would be misleading to use weighting method $W1$ with the Euclidean index because this implies that almost everything is determined by the top positions due to the squares in its mathematical formula.

\begin{example} \label{Examp33}
Consider Table~\ref{Table3}. The Euclidean index of Porto under weighting $W1$ is $\sqrt{16^2 + 4 \times 4^2 + 8 \times 1^2}$, that is, the influence of its single winning is 256 times larger than a qualification to the Round of 16. This seems to be unjustified.
\end{example}

\section{Results} \label{Sec4}

In the following, our empirical findings will be discussed. Section~\ref{Sec41} overviews some papers that are connected to measuring competitive balance in European football, which is followed by the study of the club and country performances in Section~\ref{Sec42} and the analysis of competitive balance in Section~\ref{Sec43}.  

\subsection{Related literature} \label{Sec41}

Several papers have examined competitive balance in European football leagues \citep{Koning2000, Szymanski2001, Goossens2006}. \citet{PawlowskiBreuerHovemann2010} investigate how the competitive situation in a domestic league is influenced by participation in the UEFA Champions League. \citet{HaanKoningvanWitteloostuijn2012} study in a theoretical model how the growing importance of the Champions League and the increased international trade in talent affect competitive balance within national competitions and the quality differences between them. Their results reveal that the Champions League in itself reduces inequality, both nationally and internationally, except for the very small countries.

There exists more limited research on competitive balance in European competitions. Regarding the UEFA Champions League, \citet{Milanovic2005} finds gradual deconcentration between 1963 and 1987, followed by a sharp reversal of this trend between 1988 and 2002.
According to \citet{Koning2009}, success in the Champions League has become more persistent between the time windows 1980--1998 and 1999--2007, mainly due to allowing multiple teams per country to play. The author argues that competitive balance in European tournaments is better to analyse at the level of national associations because most fans will identify with a domestic club from their home country, even if it is not their favourite.

\citet{PlumleyFlint2015} and \citet{Triguero-RuizAvila-Cano2023} study competitive balance in the group stage of the Champions League.
\citet{SchokkaertSwinnen2016} confirm the result of \citet{Milanovic2005}: the same teams are more likely to qualify for the knockout stage of the Champions League compared to its predecessor European Cup prior to the 1992/93 season. However, the uncertainty of which club wins beyond this stage has increased in the competition.
\citet{Bullough2018} analyses representation, performance, and revenue distribution in the Champions League from the 2003/04 to the 2016/17 seasons and finds that the current structure significantly benefits a small proportion of clubs.

\citet{RamchandaniPlumleyMondalMillarWilson2023} examine the competitive balance of the UEFA Champions League before and after the implementation of the UEFA Financial Fair Play (FFP) regulations in 2011. Regarding the knockout stage, they simply calculate the number of unique clubs that have progressed to the knockout stage, as well as the number of occasions that clubs from a particular country played in the semifinals and the final, or won the tournament. Their analysis shows a decline in competitive balance in the post-FFP era. Even though it is also stated that the knockout stage has been dominated by clubs from the top five European leagues since 1992/93, this is not quantified by a particular metric.


Competitive balance may depend on many factors, including the score system \citep{Haugen2008}, the scoring rate \citep{ScarfParmaMcHale2019}, major rule changes \citep{KentCaudillMixon2013}, participation in other competitions \citep{Moffat2020}, the tournament format \citep{Csato2020b, Csato2021b, LasekGagolewski2018, McGarrySchutz1997, ScarfYusofBilbao2009}, the seeding procedure \citep{CoronaForrestTenaWiper2019, DagaevRudyak2019}, the number of competitors \citep{Cairns1987}, as well as promotion and relegation \citep{Noll2002, BuzzacchiSzymanskiValletti2003}. The latter two, to some extent, are analogous to the changing set of Champions League participants between two subsequent seasons. Although these issues will not be discussed here, the consideration of the causes that govern the dynamics of competitive balance in the UEFA Champions League can be a promising direction for future research.

\subsection{The performances of clubs and national associations} \label{Sec42}

\begin{table}[t]
  \centering
  \caption{Club performances between 2003 and 2023: the top five}
  \label{Table4}
\begin{threeparttable}
  \rowcolors{1}{gray!20}{} 
    \begin{tabularx}{0.8\textwidth}{LLL} \toprule \hiderowcolors
    \multicolumn{3}{c}{Euclidean index} \\
    $W2$    & $W3$    & $W4$ \\ \bottomrule \showrowcolors    
    Real Madrid & Real Madrid & Real Madrid \\
    Barcelona & Barcelona & Bayern Munich \\
    Bayern Munich & Bayern Munich & Barcelona\tnote{1} \\
    Chelsea & Chelsea & Chelsea\tnote{1} \\
    Liverpool & Liverpool & Arsenal \\ \toprule
    \end{tabularx}
\vspace{0.25cm}
    \begin{tabularx}{0.8\textwidth}{LLL} \toprule \hiderowcolors
    \multicolumn{3}{c}{Rectangle index} \\
    $W1$    & $W2$    & $W3$ \\ \bottomrule \showrowcolors    
    Real Madrid & Real Madrid & Bayern Munich \\
    Barcelona & Bayern Munich & Real Madrid \\
    Liverpool & Barcelona & Barcelona \\
    Bayern Munich\tnote{2} & Chelsea & Chelsea \\
    Chelsea\tnote{2} & Liverpool & Arsenal \\ \toprule
    \end{tabularx}
\begin{tablenotes} \footnotesize
\item[1] Barcelona and Chelsea are tied for third place.
\item[2] Bayern Munich and Chelsea are tied for fourth place.
\end{tablenotes}
\end{threeparttable}
\end{table}

\noindent
First, we focus on the achievements of teams and countries.
Table~\ref{Table4} shows the five highest-ranked clubs in the whole period from 2003/04 to 2022/23 if their performances are quantified by the two indices and the three weighting methods suggested. The top of the ranking is remarkably robust: the first three positions are almost always occupied by the Spanish clubs Barcelona and Real Madrid together with the German Bayern Munich.
Real Madrid turns out to be the best team except for the rectangle index and weighting $W3$.
They are followed by three English clubs: Arsenal, Chelsea, and Liverpool. While Arsenal often qualified for the Round of 16, its performance there remained modest, thus, it is favoured by a less sharp weighting. For instance, under the weighting scheme $W2$, Arsenal is only the $13$th best team by the Euclidean index. Its achievement stands in stark contrast to the record of Liverpool, which was missing from the knockout stage of the Champions League in several seasons but achieved good results when qualifying for this phase.

\input{Figure4_club_performances}

Figure~\ref{Fig4} compares the achievements of these six teams over the last five seasons in the corresponding year. According to all measures, the performance of Arsenal has gradually declined. Chelsea shows a similar decrease except for the last five-year periods, while Real Madrid has been at its peak in 2018 since it has won in three subsequent seasons. However, note that formula~\eqref{eq1} also depends on the achievements of the other teams.
Barcelona (Bayern Munich) has been the best towards the beginning (middle) of our time window, and Liverpool was absent from the knockout stage of the UEFA Champions League between the seasons 2009/10 and 2016/17. Nonetheless, in this set of six clubs, it cannot be said that any team unambiguously dominated another one during the whole period considered.

\begin{table}[t]
  \centering
  \caption{Country performances between 2003 and 2023: the top five}
  \label{Table5}
\centerline{
  \rowcolors{1}{gray!20}{} 
    \begin{tabularx}{1\textwidth}{LLL LLL} \toprule \hiderowcolors
    \multicolumn{3}{c}{Euclidean index} & \multicolumn{3}{c}{Rectangle index} \\
    $W2$    & $W3$    & $W4$    & $W1$    & $W2$    & $W3$ \\ \bottomrule \showrowcolors
    England & England & England & Spain & England & England \\
    Spain & Spain & Spain & England & Spain & Spain \\
    Germany & Germany & Italy & Italy & Italy & Italy \\
    Italy & Italy & Germany & Germany & Germany & Germany \\
    France & France & France & France & France & France \\ \toprule
    \end{tabularx}
}
\end{table}

Figure~\ref{Fig4} can be somewhat misleading as the worsening performance of Arsenal and Chelsea mainly reflects the high level of competition in the English Premier League: they have failed to qualify for the knockout stage of the Champions League because they could not have played in the group stage.
Therefore, Table~\ref{Table5} turns to the comparison of national associations. The clubs of the five top leagues (England, France, Germany, Italy, Spain) clearly prevail in the Champions League. According to our measures, even their ranking is obvious except for the first place disputed by England and Spain, and the third position disputed by Germany and Italy.

\input{Figure5_country_performances}

However, Figure~\ref{Fig5} shows that clear trends can be observed in the achievements of the clubs from these leagues. In particular, the performance of English and Italian teams have decreased in the first half of the period, while German and Spanish clubs have become more competitive until the middle of the time window considered. However, the former dominance of Spanish clubs is threatened in recent years. In contrast to Figure~\ref{Fig4}, the relative order of certain national associations seems to be robust: Spain is consistently better than Germany, England is consistently better than Italy, and France has the weakest league among the top five. Spain has probably outperformed Italy, too.

To conclude, the comparison of countries is easier than the comparison of individual clubs. While this is almost a straightforward implication since the fluctuations in the performances of teams are smoothed out at the level of national associations, it yields at least two important lessons. First, UEFA follows a reasonable policy by founding the Champions League qualification on the ranking of associations rather than the ranking of individual teams \citep{Csato2022b}. Second, the current use of UEFA club coefficients for seeding in the UEFA Champions League \citep{Csato2020a} and the second-tier competition UEFA Europa League is perhaps flawed; it would be better to draw the groups on the basis of country characteristics. For instance, \citet{Guyon2015b} recommends labelling all clubs by their finishing positions in their domestic leagues and quantifying their performances in this way instead of focusing separately on each club. This approach could be fairer for a team emerging from nil: even though it was one of the greatest sporting stories of all time when Leicester City became the champion in the English Premier League despite overwhelming odds \citep{BBC2016}, the team had the second smallest UEFA club coefficient in the next season of the Champions League.

\subsection{Competitive balance} \label{Sec43}

\begin{figure}[t!]
\centering

\pgfkeys{/pgf/number format/.cd,1000 sep={}}
\begin{tikzpicture}
\begin{axis}[
name = axis1,
width = 1\textwidth, 
height = 0.6\textwidth,
xmin = 2008,
xmax = 2023,
ymajorgrids,
scaled ticks = false,
y tick label style = {/pgf/number format/.cd,fixed,precision=3},
y label style={at={(current axis.above origin)},rotate=270,anchor=south east,font=\small},
ylabel = {$\mathit{HHI}$},
legend entries={Euclidean $W2 \qquad$,Euclidean $W3 \qquad$,Euclidean $W4$,Rectangle $W1 \qquad$,Rectangle $W2 \qquad$,Rectangle $W3$},
legend style = {at={(0.5,-0.1)},anchor=north,legend columns = 3,font=\small}
]
\addplot[blue,thick,dashed,mark=o,mark options={solid}] coordinates {
(2008,0.0432132294126968)
(2009,0.0501074794919575)
(2010,0.0461559168626377)
(2011,0.0462614697465505)
(2012,0.0437951169521821)
(2013,0.0395589531316952)
(2014,0.0412710354713076)
(2015,0.0446548468099456)
(2016,0.045132673000189)
(2017,0.0497748536168644)
(2018,0.0485244408169867)
(2019,0.0468632630927995)
(2020,0.0422268584580961)
(2021,0.0440384657961595)
(2022,0.0448730619135468)
(2023,0.0430305122334808)
};
\addplot[black,thick,dotted,mark=square,mark options={solid}] coordinates {
(2008,0.0381310067930336)
(2009,0.0430938963788451)
(2010,0.0400977242031707)
(2011,0.0403384660246522)
(2012,0.0371807471954436)
(2013,0.0338121677173954)
(2014,0.0354584480110615)
(2015,0.0381687274210453)
(2016,0.0377510995966801)
(2017,0.0415702001447882)
(2018,0.0408107750826244)
(2019,0.0403883359276049)
(2020,0.0370153798705695)
(2021,0.0395526665235893)
(2022,0.0395440674722554)
(2023,0.0384203267120476)
};
\addplot[ForestGreen,thick,,mark=otimes,mark options={solid}] coordinates {
(2008,0.0329134433324317)
(2009,0.0360752376067066)
(2010,0.0338231965071848)
(2011,0.0340723536743908)
(2012,0.030320895856151)
(2013,0.0278709884536849)
(2014,0.0296623678105144)
(2015,0.0317950694051789)
(2016,0.0304153636080284)
(2017,0.0333085213447434)
(2018,0.0331779016273282)
(2019,0.0340285854361346)
(2020,0.0316269807498857)
(2021,0.0345811005252203)
(2022,0.0335520029402893)
(2023,0.0333085213447434)
};
\addplot[red,thick,dashed,mark=diamond,mark options={solid}] coordinates {
(2008,0.0618311533888228)
(2009,0.0837964788984904)
(2010,0.0797941248555723)
(2011,0.0839510204081633)
(2012,0.084452479338843)
(2013,0.0780977437158626)
(2014,0.0654549099535568)
(2015,0.0772244897959184)
(2016,0.0780246913580247)
(2017,0.108742905446202)
(2018,0.1347)
(2019,0.103403944485026)
(2020,0.0989043097151206)
(2021,0.0760132313141101)
(2022,0.0815538319129241)
(2023,0.0613774104683196)
};
\addplot[brown,thick,dotted,mark=asterisk,mark options={solid}] coordinates {
(2008,0.0474103294616115)
(2009,0.0537201953461649)
(2010,0.0503231763619575)
(2011,0.0521698984302862)
(2012,0.0535979097521362)
(2013,0.0504746943514788)
(2014,0.0495833333333333)
(2015,0.0545833333333333)
(2016,0.056512)
(2017,0.0582610477426405)
(2018,0.063114134542706)
(2019,0.0541628416072311)
(2020,0.0519773239532819)
(2021,0.0484538592646701)
(2022,0.0520792544443575)
(2023,0.046830614398182)
};
\addplot[orange,thick,,mark=triangle,mark options={solid}] coordinates {
(2008,0.0430290543163427)
(2009,0.0475967859751643)
(2010,0.0448960302457467)
(2011,0.0460121822185364)
(2012,0.044921875)
(2013,0.0419921875)
(2014,0.04212193615936)
(2015,0.0455226824457594)
(2016,0.04725)
(2017,0.048836291913215)
(2018,0.0505050505050505)
(2019,0.0455305501493928)
(2020,0.0441051506263535)
(2021,0.0441795918367347)
(2022,0.0460800403989395)
(2023,0.0429953078617256)
};
\end{axis}
\end{tikzpicture}

\captionsetup{justification=centerfirst}
\caption{Competition among the clubs in five-year periods between 2003 and 2023 \\
\footnotesize{The year on the x-axis indicates the finishing year of the last season in the period, e.g.\ 2018 corresponds to the seasons from 2013/14 to 2017/18.}}
\label{Fig6}

\end{figure}


Figure~\ref{Fig6} presents the dynamics of the Herfindahl--Hirschman index over five-year periods between 2003 and 2023 at the level of clubs. All six methods (three weighting schemes with two bibliometric indices) fluctuate without a clear trend.

\begin{figure}[t]
\centering

\pgfkeys{/pgf/number format/.cd,1000 sep={}}
\begin{tikzpicture}
\begin{axis}[
name = axis1,
width = 1\textwidth, 
height = 0.6\textwidth,
xmin = 2008,
xmax = 2023,
ymajorgrids,
scaled ticks = false,
tick label style = {/pgf/number format/fixed},
y tick label style = {/pgf/number format/.cd,fixed,precision=3},
y label style={at={(current axis.above origin)},rotate=270,anchor=south east,font=\small},
ylabel = {$\mathit{HHI}$},
legend entries={Euclidean $W2 \qquad$,Euclidean $W3 \qquad$,Euclidean $W4$,Rectangle $W1 \qquad$,Rectangle $W2 \qquad$,Rectangle $W3$},
legend style = {at={(0.5,-0.1)},anchor=north,legend columns = 3,font=\small}
]
\addplot[blue,thick,dashed,mark=o,mark options={solid}] coordinates {
(2008,0.132082369363195)
(2009,0.152428192517965)
(2010,0.13847869954545)
(2011,0.13024286687572)
(2012,0.123568014411118)
(2013,0.120762420927864)
(2014,0.122118531536249)
(2015,0.126431874103082)
(2016,0.123514840949684)
(2017,0.137262396690394)
(2018,0.143052958580698)
(2019,0.138599049639233)
(2020,0.138277976248122)
(2021,0.150118934284679)
(2022,0.14800703993551)
(2023,0.155193846562522)
};
\addplot[black,thick,dotted,mark=square,mark options={solid}] coordinates {
(2008,0.120351585099252)
(2009,0.136637482505723)
(2010,0.125765729837041)
(2011,0.117463778042631)
(2012,0.110227199796418)
(2013,0.108307219858401)
(2014,0.108659405425697)
(2015,0.110666891523614)
(2016,0.105932267764646)
(2017,0.117391895442121)
(2018,0.122015407507423)
(2019,0.123503598209656)
(2020,0.125766223170288)
(2021,0.141228246055353)
(2022,0.135575104848424)
(2023,0.146572622409621)
};
\addplot[ForestGreen,thick,,mark=otimes,mark options={solid}] coordinates {
(2008,0.104611515422675)
(2009,0.116767855257564)
(2010,0.109311673822514)
(2011,0.10139958996803)
(2012,0.0936587525838608)
(2013,0.0925421207704084)
(2014,0.0911920294441918)
(2015,0.0910966831249511)
(2016,0.0849155369546702)
(2017,0.0933983302850742)
(2018,0.097332470804878)
(2019,0.10436410418008)
(2020,0.108484946577742)
(2021,0.129323000451528)
(2022,0.118323593919019)
(2023,0.134344428140827)
};
\addplot[red,thick,dashed,mark=diamond,mark options={solid}] coordinates {
(2008,0.183662318147668)
(2009,0.226863905325444)
(2010,0.201218231463978)
(2011,0.194507318545345)
(2012,0.18698347107438)
(2013,0.174640172484121)
(2014,0.181640625)
(2015,0.209870419740839)
(2016,0.197314049586777)
(2017,0.249228395061728)
(2018,0.299606474907704)
(2019,0.255709325639396)
(2020,0.217309145880574)
(2021,0.202880859375)
(2022,0.21875696146135)
(2023,0.221759534668517)
};
\addplot[brown,thick,dotted,mark=asterisk,mark options={solid}] coordinates {
(2008,0.176206509539843)
(2009,0.198706009214783)
(2010,0.173407202216067)
(2011,0.1724)
(2012,0.160707420222991)
(2013,0.151455739633369)
(2014,0.156553279090285)
(2015,0.16320645905421)
(2016,0.153257386970452)
(2017,0.1704)
(2018,0.171)
(2019,0.164697542533081)
(2020,0.166284265185364)
(2021,0.169665499335829)
(2022,0.176512287334594)
(2023,0.175094517958412)
};
\addplot[orange,thick,,mark=triangle,mark options={solid}] coordinates {
(2008,0.150632866004518)
(2009,0.165008036660853)
(2010,0.154124149659864)
(2011,0.160032653061225)
(2012,0.146609213091208)
(2013,0.135726261240665)
(2014,0.138871956685432)
(2015,0.144215314113744)
(2016,0.136788284307537)
(2017,0.154024489795918)
(2018,0.157946888624653)
(2019,0.154075083366202)
(2020,0.160188995579942)
(2021,0.170858100899253)
(2022,0.1678125)
(2023,0.170556691485668)
};
\end{axis}
\end{tikzpicture}

\captionsetup{justification=centerfirst}
\caption{Competition among the countries in five-year periods between 2003 and 2023 \\
\footnotesize{The year on the x-axis indicates the finishing year of the last season in the period, e.g.\ 2018 corresponds to the seasons from 2013/14 to 2017/18.}}
\label{Fig7}

\end{figure}


Figure~\ref{Fig7} follows another approach by focusing on the competition between the countries. Except for the rectangle index under weighting $W1$, the other five measures uncover a deterioration of competitive balance at the level of national associations in the second half of the period.

\begin{figure}[t!]
\centering

\pgfkeys{/pgf/number format/.cd,1000 sep={}}
\begin{tikzpicture}
\begin{axis}[
name = axis1,
width = 1\textwidth, 
height = 0.6\textwidth,
xmin = 2008,
xmax = 2023,
ymajorgrids,
scaled ticks = false,
tick label style = {/pgf/number format/fixed},
y tick label style = {/pgf/number format/.cd,fixed,precision=3},
y label style={at={(current axis.above origin)},rotate=270,anchor=south east,font=\small},
ylabel = {$\mathit{HHI}$},
legend entries={Euclidean $W2 \qquad$,Euclidean $W3 \qquad$,Euclidean $W4$,Rectangle $W1 \qquad$,Rectangle $W2 \qquad$,Rectangle $W3$},
legend style = {at={(0.5,-0.1)},anchor=north,legend columns = 3,font=\small}
]
\addplot[blue,thick,dashed,mark=o,mark options={solid}] coordinates {
(2008,0.592265602817056)
(2009,0.639824790495624)
(2010,0.652091617595872)
(2011,0.658790963460826)
(2012,0.642761860653076)
(2013,0.642761860653076)
(2014,0.652091617595872)
(2015,0.652091617595872)
(2016,0.642761860653076)
(2017,0.673551751199573)
(2018,0.686057869698992)
(2019,0.648890822131322)
(2020,0.669673835345223)
(2021,0.686057869698992)
(2022,0.665926157982717)
(2023,0.658790963460826)
};
\addplot[black,thick,dotted,mark=square,mark options={solid}] coordinates {
(2008,0.58153747862377)
(2009,0.606731373234781)
(2010,0.610010509545971)
(2011,0.619339070160169)
(2012,0.60460641942092)
(2013,0.60460641942092)
(2014,0.610010509545971)
(2015,0.610010509545971)
(2016,0.597527792902829)
(2017,0.621815728188064)
(2018,0.628287566909727)
(2019,0.612260438104376)
(2020,0.633777255386176)
(2021,0.65913480545812)
(2022,0.638093895091543)
(2023,0.636635123117083)
};
\addplot[ForestGreen,thick,,mark=otimes,mark options={solid}] coordinates {
(2008,0.555555555555556)
(2009,0.555555555555556)
(2010,0.550008594033486)
(2011,0.561600640512513)
(2012,0.550008594033486)
(2013,0.550008594033486)
(2014,0.550008594033486)
(2015,0.550008594033486)
(2016,0.535898384862245)
(2017,0.550008594033486)
(2018,0.550008594033486)
(2019,0.555555555555556)
(2020,0.575434341467906)
(2021,0.61275965371367)
(2022,0.59215435776876)
(2023,0.601888306193515)
};
\addplot[red,thick,dashed,mark=diamond,mark options={solid}] coordinates {
(2008,0.702479338842975)
(2009,0.722222222222222)
(2010,0.710915081305133)
(2011,0.734072022160665)
(2012,0.710915081305133)
(2013,0.710915081305133)
(2014,0.710915081305133)
(2015,0.710915081305133)
(2016,0.68)
(2017,0.710915081305133)
(2018,0.710915081305133)
(2019,0.722222222222222)
(2020,0.759509769915597)
(2021,0.818204772124732)
(2022,0.787465281970776)
(2023,0.802469135802469)
};
\addplot[brown,thick,dotted,mark=asterisk,mark options={solid}] coordinates {
(2008,0.697133585722012)
(2009,0.697133585722012)
(2010,0.685559519091029)
(2011,0.709342560553633)
(2012,0.68)
(2013,0.68)
(2014,0.685559519091029)
(2015,0.685559519091029)
(2016,0.654320987654321)
(2017,0.696171959908224)
(2018,0.7012371372413)
(2019,0.702479338842975)
(2020,0.745805258290395)
(2021,0.806648279866454)
(2022,0.770068224706053)
(2023,0.78125)
};
\addplot[orange,thick,,mark=triangle,mark options={solid}] coordinates {
(2008,0.68)
(2009,0.68)
(2010,0.6653125)
(2011,0.6953125)
(2012,0.6653125)
(2013,0.6653125)
(2014,0.6653125)
(2015,0.6653125)
(2016,0.625)
(2017,0.6653125)
(2018,0.6653125)
(2019,0.68)
(2020,0.7278125)
(2021,0.8003125)
(2022,0.7628125)
(2023,0.78125)
};
\end{axis}
\end{tikzpicture}

\captionsetup{justification=centerfirst}
\caption{Competition in five-year periods between 2003 and 2023: the top five associations versus other countries \\
\footnotesize{The year on the x-axis indicates the finishing year of the last season in the period, e.g.\ 2018 corresponds to the seasons from 2013/14 to 2017/18.}}
\label{Fig8}

\end{figure}


Competitive balance can also be studied in the relation of big and small leagues. For instance, the 2019/20 season was the single one when only teams from the top five national associations qualified for the Round of 16.
Indeed, Figure~\ref{Fig8} shows some evidence that the uncertainty of outcome has recently declined from this perspective.

\input{Figure9_within_country_competitive_balance}

Finally, Figure~\ref{Fig9} analyses the third variant of competitive balance by focusing on the top five leagues, England (with 7 clubs in our dataset), France (6), Germany (10), Italy (8), and Spain (10). Here, there are some clear tendencies in the uncertainty of outcome. England and, recently, Italy have managed to improve competitive balance. The highest $\mathit{HHI}$s in France and Italy can be attributed to the dominance of one club, Juventus (the Italian champion between the seasons 2011/12 and 2019/20) and Paris Saint-Germain (the French champion since 2012/13 except for two seasons), respectively.
While these observations concern the competitiveness of domestic leagues, it is highlighted from an unusual perspective by focusing on the international performance of the leading clubs.

\section{Concluding remarks} \label{Sec5}

This paper has aimed to demonstrate how bibliometric indices can be used to measure the level of competition in a knockout tournament. Although our approach involves the choice of two crucial variables, the scientometric index and the weighting method, most results for the UEFA Champions League seem to be remarkably robust concerning these parameters. Basically, neither the performance of clubs and countries nor the three types of competitive balance depend on them to a great extent.

Nevertheless, our results can be sensitive to the luck of the draw in the knockout stage: as the weighting system awards more points to the clubs as they progress in the tournament, uncertainty of outcome can decrease merely because strong clubs are drawn early against each other. This, together with the low scoring nature of the sport, might make the final result of a season relatively noisy.
On the other hand, the public views only the outcome of the matches and is probably not satisfied if computer simulations do not show a deterioration in the level of competition but the same clubs almost always win in the real world.

The main findings of the work can be summarised as follows:
\begin{itemize}
\item
The performance of national associations varies less than the performance of individual clubs;
\item
In the middle of the period considered, Spanish teams have gained prominence at the expense of English and Italian teams, but this trend has reversed in the last years;
\item
Even though competitive balance has not changed at the level of clubs since 2003/04, there is some evidence for a recent decline at the level of countries, and, especially, in the relation of top five versus other leagues.
\end{itemize}
The relative stability of competitive balance in the last two decades supports the results of \citet{SchokkaertSwinnen2016} and can be explained by the qualification rules of the UEFA Champions League that allow multiple teams from the strongest leagues to compete. This movement partially contradicts the expected effects of free labour markets in European football, which should have increased the inequality of results among clubs \citep{Milanovic2005}.

The current study may have important policy lessons for sports administrators.
First, since the performances of national associations are more stable than the results of individual clubs, it would be better to build the seeding in the UEFA Champions League group stage upon association coefficients rather than club coefficients, accounting for the positions that the teams have achieved in their domestic leagues. This change would favour leagues where the competition within the country is stronger, and the set of teams qualifying for the Champions League is more volatile across the seasons.
The methodology proposed here can even be used to determine the ranking of national associations for the UEFA access list.

Second, there is no clear relationship between the competitive balance in a national league and the competitiveness of its clubs in the UEFA Champions League: the aggregated performances of English and Italian teams have both declined in the middle of the period considered (Figure~\ref{Fig5}), however, the competition within the countries has changed in the opposite way (Figure~\ref{Fig9}). A possible explanation for the case of Italy can be that even though domination in the domestic league helps a club to buy better players, the decreased incentives to be efficient at the home front may be detrimental to its international performance.

\section*{Acknowledgements}
\addcontentsline{toc}{section}{Acknowledgements}
\noindent
\emph{L\'aszl\'o Csat\'o}, the father of the first author has helped in carrying out the calculations. \\
Six anonymous reviewers provided valuable comments and suggestions on an earlier draft. \\
We are grateful to the \href{https://en.wikipedia.org/wiki/Wikipedia_community}{Wikipedia community} for collecting and structuring invaluable information on the sports tournaments discussed.

\bibliographystyle{apalike}
\bibliography{All_references}

\clearpage

\section*{Appendix}
\addcontentsline{toc}{section}{Appendix}

\renewcommand\thetable{A.\arabic{table}}
\setcounter{table}{0}

\makeatletter
\renewcommand\p@subtable{A.\arabic{table}}
\makeatother

\renewcommand\thefigure{A.\arabic{figure}}
\setcounter{figure}{0}

\makeatletter
\renewcommand\p@subfigure{A.\arabic{figure}}
\makeatother

\begin{sidewaystable}
  \centering
  \caption{Clubs in the knockout stage of the Champions League between the seasons 2003/04 and 2022/23 I.}
  \label{Table_A1}
\begin{adjustbox}{max width=\textwidth}
\begin{threeparttable}
    \rowcolors{1}{}{gray!20}
    \begin{tabular}{ll ccccc ccccc ccccc ccccc} \toprule
    Team  & Country & 2003  & 2004  & 2005  & 2006  & 2007  & 2008  & 2009  & 2010  & 2011  & 2012  & 2013  & 2014  & 2015  & 2016  & 2017  & 2018  & 2019  & 2020  & 2021  & 2022 \\ \bottomrule
    Ajax  & Netherlands & ---   & ---   & R16   & ---   & ---   & ---   & ---   & ---   & ---   & ---   & ---   & ---   & ---   & ---   & ---   & SF    & ---   & ---   & R16   & --- \\
    APOEL & Cyprus & ---   & ---   & ---   & ---   & ---   & ---   & ---   & ---   & QF    & ---   & ---   & ---   & ---   & ---   & ---   & ---   & ---   & ---   & ---   & --- \\
    Arsenal & England & QF    & R16   & F     & R16   & QF    & SF    & QF    & R16   & R16   & R16   & R16   & R16   & R16   & R16   & ---   & ---   & ---   & ---   & ---   & --- \\
    Atalanta & Italy & ---   & ---   & ---   & ---   & ---   & ---   & ---   & ---   & ---   & ---   & ---   & ---   & ---   & ---   & ---   & ---   & QF    & R16   & ---   & --- \\
    Atl\'etico Madrid & Spain & ---   & ---   & ---   & ---   & ---   & R16   & ---   & ---   & ---   & ---   & F     & QF    & F     & SF    & ---   & R16   & QF    & R16   & QF    & --- \\
    Barcelona & Spain & ---   & R16   & W     & R16   & SF    & W     & SF    & W     & SF    & SF    & QF    & W     & QF    & QF    & QF    & SF    & QF    & R16   & ---   & --- \\
    Basel & Switzerland & ---   & ---   & ---   & ---   & ---   & ---   & ---   & ---   & R16   & ---   & ---   & R16   & ---   & ---   & R16   & ---   & ---   & ---   & ---   & --- \\
    Bayer Leverkusen & Germany & ---   & R16   & ---   & ---   & ---   & ---   & ---   & ---   & R16   & ---   & R16   & R16   & ---   & R16   & ---   & ---   & ---   & ---   & ---   & --- \\
    Bayern Munich & Germany & R16   & QF    & R16   & QF    & ---   & QF    & F     & R16   & F     & W     & SF    & SF    & SF    & QF    & SF    & R16   & W     & QF    & QF    & QF \\
    Benfica & Portugal & ---   & ---   & QF    & ---   & ---   & ---   & ---   & ---   & QF    & ---   & ---   & ---   & QF    & R16   & ---   & ---   & ---   & ---   & QF    & QF \\
    Be{\c s}ikta{\c s} & Turkey & ---   & ---   & ---   & ---   & ---   & ---   & ---   & ---   & ---   & ---   & ---   & ---   & ---   & ---   & R16   & ---   & ---   & ---   & ---   & --- \\
    Bordeaux & France & ---   & ---   & ---   & ---   & ---   & ---   & QF    & ---   & ---   & ---   & ---   & ---   & ---   & ---   & ---   & ---   & ---   & ---   & ---   & --- \\
    Borussia Dortmund & Germany & ---   & ---   & ---   & ---   & ---   & ---   & ---   & ---   & ---   & F     & QF    & R16   & ---   & QF    & ---   & R16   & R16   & QF    & ---   & R16 \\
    Borussia M\"onchengladbach & Germany & ---   & ---   & ---   & ---   & ---   & ---   & ---   & ---   & ---   & ---   & ---   & ---   & ---   & ---   & ---   & ---   & ---   & R16   & ---   & --- \\
    Celta Vigo & Spain & R16   & ---   & ---   & ---   & ---   & ---   & ---   & ---   & ---   & ---   & ---   & ---   & ---   & ---   & ---   & ---   & ---   & ---   & ---   & --- \\
    Celtic & Scotland & ---   & ---   & ---   & R16   & R16   & ---   & ---   & ---   & ---   & R16   & ---   & ---   & ---   & ---   & ---   & ---   & ---   & ---   & ---   & --- \\
    Chelsea & England & SF    & SF    & R16   & SF    & F     & SF    & R16   & QF    & W     & ---   & SF    & R16   & R16   & ---   & R16   & ---   & R16   & W     & QF    & QF \\
    Club Brugge & Belgium & ---   & ---   & ---   & ---   & ---   & ---   & ---   & ---   & ---   & ---   & ---   & ---   & ---   & ---   & ---   & ---   & ---   & ---   & ---   & R16 \\
    Copenhagen & Denmark & ---   & ---   & ---   & ---   & ---   & ---   & ---   & R16   & ---   & ---   & ---   & ---   & ---   & ---   & ---   & ---   & ---   & ---   & ---   & --- \\
    CSKA Moscow & Russia & ---   & ---   & ---   & ---   & ---   & ---   & QF    & ---   & R16   & ---   & ---   & ---   & ---   & ---   & ---   & ---   & ---   & ---   & ---   & --- \\
    Deportivo La Coru{\~ n}a & Spain & SF    & ---   & ---   & ---   & ---   & ---   & ---   & ---   & ---   & ---   & ---   & ---   & ---   & ---   & ---   & ---   & ---   & ---   & ---   & --- \\
    Dinamo Kyiv & Ukraine & ---   & ---   & ---   & ---   & ---   & ---   & ---   & ---   & ---   & ---   & ---   & ---   & R16   & ---   & ---   & ---   & ---   & ---   & ---   & --- \\
    Eintracht Frankfurt & Germany & ---   & ---   & ---   & ---   & ---   & ---   & ---   & ---   & ---   & ---   & ---   & ---   & ---   & ---   & ---   & ---   & ---   & ---   & ---   & R16 \\
    Fenerbahce & Turkey & ---   & ---   & ---   & ---   & QF    & ---   & ---   & ---   & ---   & ---   & ---   & ---   & ---   & ---   & ---   & ---   & ---   & ---   & ---   & --- \\
    Fiorentina & Italy & ---   & ---   & ---   & ---   & ---   & ---   & R16   & ---   & ---   & ---   & ---   & ---   & ---   & ---   & ---   & ---   & ---   & ---   & ---   & --- \\
    Galatasaray & Turkey & ---   & ---   & ---   & ---   & ---   & ---   & ---   & ---   & ---   & QF    & R16   & ---   & ---   & ---   & ---   & ---   & ---   & ---   & ---   & --- \\
    Gent  & Belgium & ---   & ---   & ---   & ---   & ---   & ---   & ---   & ---   & ---   & ---   & ---   & ---   & R16   & ---   & ---   & ---   & ---   & ---   & ---   & --- \\
    Internazionale & Italy & ---   & QF    & QF    & R16   & R16   & R16   & W     & QF    & R16   & ---   & ---   & ---   & ---   & ---   & ---   & ---   & ---   & ---   & R16   & F \\
    Juventus & Italy & R16   & QF    & QF    & ---   & ---   & R16   & ---   & ---   & ---   & QF    & ---   & F     & R16   & F     & QF    & QF    & R16   & R16   & R16   & --- \\
    Lazio & Italy & ---   & ---   & ---   & ---   & ---   & ---   & ---   & ---   & ---   & ---   & ---   & ---   & ---   & ---   & ---   & ---   & ---   & R16   & ---   & --- \\
    Leicester City & England & ---   & ---   & ---   & ---   & ---   & ---   & ---   & ---   & ---   & ---   & ---   & ---   & ---   & QF    & ---   & ---   & ---   & ---   & ---   & --- \\
    Lille & France & ---   & ---   & ---   & R16   & ---   & ---   & ---   & ---   & ---   & ---   & ---   & ---   & ---   & ---   & ---   & ---   & ---   & ---   & R16   & --- \\
    Liverpool & England & ---   & W     & R16   & F     & SF    & QF    & ---   & ---   & ---   & ---   & ---   & ---   & ---   & ---   & F     & W     & R16   & QF    & F     & R16 \\ \toprule
    \end{tabular}
\begin{tablenotes}
\item
The year corresponds to the beginning of the season, e.g.\ 2003 represents the 2003/04 season.
\item
Abbreviations: R16 = Round of 16; QF = quarterfinals; SF = semifinals; F = final; W = winner.
\end{tablenotes}
\end{threeparttable}
\end{adjustbox}
\end{sidewaystable}

\begin{sidewaystable}
  \centering
  \caption{Clubs in the knockout stage of the Champions League between the seasons 2003/04 and 2022/23 II.}
  \label{Table_A2}
\begin{adjustbox}{max width=\textwidth}
\begin{threeparttable}	
    \rowcolors{1}{}{gray!20}
    \begin{tabular}{ll ccccc ccccc ccccc ccccc} \toprule
    Team  & Country & 2003  & 2004  & 2005  & 2006  & 2007  & 2008  & 2009  & 2010  & 2011  & 2012  & 2013  & 2014  & 2015  & 2016  & 2017  & 2018  & 2019  & 2020  & 2021  & 2022 \\ \bottomrule
    Lokomotiv Moscow & Russia & R16   & ---   & ---   & ---   & ---   & ---   & ---   & ---   & ---   & ---   & ---   & ---   & ---   & ---   & ---   & ---   & ---   & ---   & ---   & --- \\
    Lyon  & France & QF    & QF    & QF    & R16   & R16   & R16   & SF    & R16   & R16   & ---   & ---   & ---   & ---   & ---   & ---   & R16   & SF    & ---   & ---   & --- \\
    M\'alaga & Spain & ---   & ---   & ---   & ---   & ---   & ---   & ---   & ---   & ---   & QF    & ---   & ---   & ---   & ---   & ---   & ---   & ---   & ---   & ---   & --- \\
    Manchester City & England & ---   & ---   & ---   & ---   & ---   & ---   & ---   & ---   & ---   & ---   & R16   & R16   & SF    & R16   & QF    & QF    & QF    & F     & SF    & W \\
    Manchester United & England & R16   & R16   & ---   & SF    & W     & F     & QF    & F     & ---   & R16   & QF    & ---   & ---   & ---   & R16   & QF    & ---   & ---   & R16   & --- \\
    Marseille & France & ---   & ---   & ---   & ---   & ---   & ---   & ---   & R16   & QF    & ---   & ---   & ---   & ---   & ---   & ---   & ---   & ---   & ---   & ---   & --- \\
    Milan & Italy & QF    & F     & SF    & W     & R16   & ---   & R16   & R16   & QF    & R16   & R16   & ---   & ---   & ---   & ---   & ---   & ---   & ---   & ---   & SF \\
    Monaco & France & F     & R16   & ---   & ---   & ---   & ---   & ---   & ---   & ---   & ---   & ---   & QF    & ---   & SF    & ---   & ---   & ---   & ---   & ---   & --- \\
    Napoli & Italy & ---   & ---   & ---   & ---   & ---   & ---   & ---   & ---   & R16   & ---   & ---   & ---   & ---   & R16   & ---   & ---   & R16   & ---   & ---   & QF \\
    Olympiacos & Greece & ---   & ---   & ---   & ---   & R16   & ---   & R16   & ---   & ---   & ---   & R16   & ---   & ---   & ---   & ---   & ---   & ---   & ---   & ---   & --- \\
    Panathinaikos & Greece & ---   & ---   & ---   & ---   & ---   & R16   & ---   & ---   & ---   & ---   & ---   & ---   & ---   & ---   & ---   & ---   & ---   & ---   & ---   & --- \\
    Paris Saint-Germain & France & ---   & ---   & ---   & ---   & ---   & ---   & ---   & ---   & ---   & QF    & QF    & QF    & QF    & R16   & R16   & R16   & F     & SF    & R16   & R16 \\
    Porto & Portugal & W     & R16   & ---   & R16   & R16   & QF    & R16   & ---   & ---   & R16   & ---   & QF    & ---   & R16   & R16   & QF    & ---   & QF    & ---   & R16 \\
    PSV Eindhoven & Netherlands & ---   & SF    & R16   & QF    & ---   & ---   & ---   & ---   & ---   & ---   & ---   & ---   & R16   & ---   & ---   & ---   & ---   & ---   & ---   & --- \\
    Rangers & Scotland & ---   & ---   & R16   & ---   & ---   & ---   & ---   & ---   & ---   & ---   & ---   & ---   & ---   & ---   & ---   & ---   & ---   & ---   & ---   & --- \\
    RB Leipzig & Germany & ---   & ---   & ---   & ---   & ---   & ---   & ---   & ---   & ---   & ---   & ---   & ---   & ---   & ---   & ---   & ---   & SF    & R16   & ---   & R16 \\
    RB Salzburg & Austria & ---   & ---   & ---   & ---   & ---   & ---   & ---   & ---   & ---   & ---   & ---   & ---   & ---   & ---   & ---   & ---   & ---   & ---   & R16   & --- \\
    Real Madrid & Spain & QF    & R16   & R16   & R16   & R16   & R16   & R16   & SF    & SF    & SF    & W     & SF    & W     & W     & W     & R16   & R16   & SF    & W     & SF \\
    Real Sociedad & Spain & R16   & ---   & ---   & ---   & ---   & ---   & ---   & ---   & ---   & ---   & ---   & ---   & ---   & ---   & ---   & ---   & ---   & ---   & ---   & --- \\
    Roma  & Italy & ---   & ---   & ---   & QF    & QF    & R16   & ---   & R16   & ---   & ---   & ---   & ---   & R16   & ---   & SF    & R16   & ---   & ---   & ---   & --- \\
    Schalke 04 & Germany & ---   & ---   & ---   & ---   & QF    & ---   & ---   & SF    & ---   & R16   & R16   & R16   & ---   & ---   & ---   & R16   & ---   & ---   & ---   & --- \\
    Sevilla & Spain & ---   & ---   & ---   & ---   & R16   & ---   & R16   & ---   & ---   & ---   & ---   & ---   & ---   & R16   & QF    & ---   & ---   & R16   & ---   & --- \\
    Shakhtar Donetsk & Ukraine & ---   & ---   & ---   & ---   & ---   & ---   & ---   & QF    & ---   & R16   & ---   & R16   & ---   & ---   & R16   & ---   & ---   & ---   & ---   & --- \\
    Sparta Prague & Czech Republic & R16   & ---   & ---   & ---   & ---   & ---   & ---   & ---   & ---   & ---   & ---   & ---   & ---   & ---   & ---   & ---   & ---   & ---   & ---   & --- \\
    Sporting CP & Portugal & ---   & ---   & ---   & ---   & ---   & R16   & ---   & ---   & ---   & ---   & ---   & ---   & ---   & ---   & ---   & ---   & ---   & ---   & R16   & --- \\
    Stuttgart & Germany & R16   & ---   & ---   & ---   & ---   & ---   & R16   & ---   & ---   & ---   & ---   & ---   & ---   & ---   & ---   & ---   & ---   & ---   & ---   & --- \\
    Tottenham Hotspur & England & ---   & ---   & ---   & ---   & ---   & ---   & ---   & QF    & ---   & ---   & ---   & ---   & ---   & ---   & R16   & F     & R16   & ---   & ---   & R16 \\
    Valencia & Spain & ---   & ---   & ---   & QF    & ---   & ---   & ---   & R16   & ---   & R16   & ---   & ---   & ---   & ---   & ---   & ---   & R16   & ---   & ---   & --- \\
    Villareal & Spain & ---   & ---   & SF    & ---   & ---   & QF    & ---   & ---   & ---   & ---   & ---   & ---   & ---   & ---   & ---   & ---   & ---   & ---   & SF    & --- \\
    Werder Bremen & Germany & ---   & R16   & R16   & ---   & ---   & ---   & ---   & ---   & ---   & ---   & ---   & ---   & ---   & ---   & ---   & ---   & ---   & ---   & ---   & --- \\
    Wolfsburg & Germany & ---   & ---   & ---   & ---   & ---   & ---   & ---   & ---   & ---   & ---   & ---   & ---   & QF    & ---   & ---   & ---   & ---   & ---   & ---   & --- \\
    Zenit Saint Petersburg & Russia & ---   & ---   & ---   & ---   & ---   & ---   & ---   & ---   & R16   & ---   & R16   & ---   & R16   & ---   & ---   & ---   & ---   & ---   & ---   & --- \\ \bottomrule
    \end{tabular}
\begin{tablenotes}
\item
The year corresponds to the beginning of the season, e.g.\ 2003 represents the 2003/04 season.
\item
Abbreviations: R16 = Round of 16; QF = quarterfinals; SF = semifinals; F = final; W = winner.
\end{tablenotes}
\end{threeparttable}
\end{adjustbox}
\end{sidewaystable}

\end{document}